\documentclass[11pt]{amsart}

\usepackage{amsmath,amssymb,amsthm}
\usepackage{hyperref}
\usepackage{graphicx}
\usepackage[margin=1in]{geometry}
\usepackage{booktabs}
\usepackage{array}
\usepackage{enumitem}

\makeatletter
\@namedef{subjclassname@2020}{%
  \textup{Subject Classification (MSC 2020)}}
\makeatother
\makeatletter
\renewcommand{\section}{\@startsection{section}{1}%
  \z@{.7\linespacing\@plus\linespacing}{.5\linespacing}%
  {\normalfont\bfseries}}
\renewcommand{\subsection}{\@startsection{subsection}{2}%
  \z@{.5\linespacing\@plus.7\linespacing}{-.5em}%
  {\normalfont\bfseries}}
\makeatother

\theoremstyle{plain}
\newtheorem{theorem}{Theorem}[section]
\newtheorem{lemma}[theorem]{Lemma}
\newtheorem{proposition}[theorem]{Proposition}

\theoremstyle{definition}
\newtheorem{definition}[theorem]{Definition}
\newtheorem{example}[theorem]{Example}

\theoremstyle{remark}
\newtheorem{remark}[theorem]{Remark}

\newcommand{\gc}[1]{\ulcorner #1 \urcorner}

\begin{document}

\title{A Linear-Size Block-Partition Fibonacci Encoding\\for G\"odel Numbering}

\author{Zolt\'an S\'ostai}
\thanks{ORCID: \href{https://orcid.org/0009-0007-8628-4276}{0009-0007-8628-4276}}

\date{March 2026}

\begin{abstract}
We construct an encoding of finite strings over a fixed finite alphabet as natural numbers, based on a block partition of the Fibonacci sequence. Each position in the string selects one Fibonacci number from a dedicated block, with unused indices between blocks guaranteeing non-adjacency. The encoded number is the sum of the selected Fibonacci numbers, and Zeckendorf's theorem \cite{Zeckendorf1972,Lekkerkerker1952} guarantees that this sum uniquely determines the selection. The encoding is injective, the string length is recoverable from the code, and the worst-case digit count of the encoded number grows as $\Theta(m)$ for strings of length~$m$---matching the information-theoretic lower bound up to a constant factor. We also prove that the natural right-nested use of Rosko's~\cite{Rosko2025} binary carryless pairing for sequence encoding has worst-case $\Theta(2^m)$ digit growth---an exponential blowup that the block-partition construction avoids entirely.
\end{abstract}

\subjclass[2020]{03B70, 11B39, 03D20}

\keywords{Zeckendorf representation, Fibonacci encoding, G\"odel numbering, block partition}

\maketitle

\section{Introduction}

G\"odel's incompleteness theorems~\cite{Goedel1931} require encoding syntactic objects---symbols, formulas, proofs---as natural numbers. The original encoding maps a sequence $\langle a_1, \ldots, a_m \rangle$ to the product
\[
2^{a_1+1} \cdot 3^{a_2+1} \cdot 5^{a_3+1} \cdots p_m^{a_m+1},
\]
where $p_i$ is the $i$-th prime and the~$+1$ ensures positive exponents. The fundamental theorem of arithmetic guarantees injectivity, and decoding reduces to factorization. This encoding produces very large numbers: even short formulas yield codes of dozens or hundreds of digits.

Rosko~\cite{Rosko2025} proposed an alternative based on Zeckendorf representations---unique decompositions of natural numbers as sums of non-consecutive Fibonacci numbers~\cite{Zeckendorf1972,Lekkerkerker1952}. His ``carryless pairing'' function is a pairing $\mathbb{N}^2 \to \mathbb{N}$ that places the Zeckendorf supports of two inputs into disjoint even/odd index bands. Encoding sequences of length~$m$ via right-nesting of this binary pairing requires $m-1$ applications. We prove in~\S\ref{sec:blowup} that this right-nested use has worst-case $\Theta(2^m)$ digit growth.

The present paper introduces a simpler construction: a \textbf{block-partition encoding} that partitions the Fibonacci sequence into contiguous blocks, one per string position, with single-index gaps between blocks to ensure non-adjacency. Each symbol selects one Fibonacci number from its position's block, and the encoded number is the sum. No nesting is involved. The digit count of the encoded number grows linearly in the string length.

The paper isolates the \textbf{numeric coding layer} of a G\"odel numbering: the injective mapping from strings to natural numbers. The further machinery required for a full syntactic arithmetization---substitution, concatenation, diagonalization---is not addressed here.

\begin{remark}\label{rem:fibonacci-coding}
The term ``Fibonacci coding'' already denotes a variable-length binary code used in data compression~\cite{Fraenkel1985}, where each positive integer is represented by a binary string derived from its Zeckendorf representation. Our construction is distinct: we encode \emph{strings over a formal alphabet} as single natural numbers. The shared foundation is Zeckendorf's theorem, but the purpose and mechanism differ.
\end{remark}

\section{Preliminaries}

\subsection{Fibonacci sequence}

Let $\{F_n\}_{n \ge 0}$ denote the Fibonacci sequence defined by $F_0=0$, $F_1=1$, and $F_{n+2}=F_{n+1}+F_n$ for all $n \ge 0$. The sequence begins
\[
0,\; 1,\; 1,\; 2,\; 3,\; 5,\; 8,\; 13,\; 21,\; 34,\; 55,\; 89,\; 144,\; 233,\; 377,\; 610,\; 987,\; \ldots
\]
It is well known that $F_n = (\varphi^n - \psi^n)/\sqrt{5}$, where $\varphi = (1+\sqrt{5})/2 \approx 1.618$ is the golden ratio and $\psi = (1-\sqrt{5})/2 \approx -0.618$ (see, e.g., \cite{GKP1994}, \S6.6).

\subsection{Zeckendorf's theorem}

\begin{theorem}[Zeckendorf~\cite{Zeckendorf1972}, Lekkerkerker~\cite{Lekkerkerker1952}]\label{thm:zeck}
Every positive integer~$n$ has a unique representation as a sum of non-consecutive Fibonacci numbers:
\[
n = F_{e_1} + F_{e_2} + \cdots + F_{e_s}
\]
where $e_1 > e_2 > \cdots > e_s \ge 2$ and $e_i - e_{i+1} \ge 2$ for all~$i$.
\end{theorem}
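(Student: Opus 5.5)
Existence and uniqueness will be handled separately, both resting on one elementary inequality. I will first prove the \emph{sum lemma}: if $e_1 > e_2 > \cdots > e_s \ge 2$ with $e_i - e_{i+1} \ge 2$, then
\[
F_{e_1} + F_{e_2} + \cdots + F_{e_s} < F_{e_1+1}.
\]
The proof is by induction on $s$. For $s=1$ the claim is $F_{e_1} < F_{e_1+1}$, which holds because $e_1 \ge 2$ gives $F_{e_1-1} \ge 1$. For the inductive step, the tail $F_{e_2}+\cdots+F_{e_s}$ is itself a non-consecutive sum with largest index $e_2$. By the induction hypothesis it is $< F_{e_2+1} \le F_{e_1-1}$, using $e_2+1 \le e_1-1$ and monotonicity of $F_n$ for $n\ge 1$. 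Adding $F_{e_1}$ then gives a total below $F_{e_1}+F_{e_1-1} = F_{e_1+1}$.

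\textbf{Existence} will follow by strong induction on $n$ using the greedy choice. Given $n \ge 1$, let $e$ be the largest index $\ge 2$ with $F_e \le n$. Such an index exists since $F_2 = 1$, and it is finite since $F_n \to \infty$. Then $F_e \le n < F_{e+1}$. If $n = F_e$ we are done. Otherwise set $r = n - F_e$, so that $0 < r < F_{e+1} - F_e = F_{e-1}$. Since $r < n$, the induction hypothesis gives a representation of $r$ whose largest index $e'$ satisfies $F_{e'} \le r < F_{e-1}$, so $e' < e-1$, i.e.\ $e - e' \ge 2$. Prepending $F_e$ therefore keeps the non-consecutive condition and produces a valid representation of $n$.

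\textbf{Uniqueness} is the step needing the most care, and I will handle it through the sum lemma. Suppose two distinct index sets $S \ne T$, both non-consecutive with all indices $\ge 2$, have equal sums. First remove their common elements. The remaining sets $S' = S\setminus T$ and $T' = T\setminus S$ still have equal sums and are still non-consecutive. They are not both empty, and since all $F_e$ with $e \ge 2$ are positive, equal sums force both to be nonempty. Let $a = \max S'$ and $b = \max T'$; these differ since the sets are disjoint, so say $a > b$. By the sum lemma, the sum over $T'$ is $< F_{b+1} \le F_a$. But the sum over $S'$ is at least $F_a$, a contradiction.

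The main obstacle is being careful with the index restriction $e \ge 2$. Without it, $F_1 = F_2 = 1$ would break uniqueness, and monotonicity only holds from index $1$ onward. I will check explicitly that the sum lemma's base case and the step $F_{b+1} \le F_a$ use only indices $\ge 2$.
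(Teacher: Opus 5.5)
Your proof is correct. It is the standard argument: greedy choice for existence, and for uniqueness, cancelling common indices and then applying the bound $F_{e_1}+\cdots+F_{e_s} < F_{e_1+1}$.

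The paper does not prove this theorem; it cites Zeckendorf and Lekkerkerker. It does, however, rely elsewhere on exactly your two ingredients. The greedy algorithm is how it computes $Z(n)$. The proof of Theorem~\ref{thm:linear} invokes, as a ``standard property,'' that a non-consecutive sum with indices at most $e$ is below $F_{e+1}$. That is your sum lemma combined with $F_{e_1+1}\le F_{e+1}$ for $e_1\le e$, so your write-up also supplies the justification the paper leaves implicit there.

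One minor precision on your closing remark: $F_n$ is nondecreasing for all $n\ge 0$ and strictly increasing only from $n\ge 2$. Your argument uses only the non-strict version, so nothing breaks. The hypothesis $e\ge 2$ is genuinely needed only in the base case $F_{e_1-1}\ge 1$ of the sum lemma. Without it the lemma fails, e.g.\ $F_3+F_1=F_4$.
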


The set $Z(n) = \{e_1, e_2, \ldots, e_s\}$ is called the \textbf{Zeckendorf support} of~$n$. We define $Z(0)=\varnothing$. The support of a positive integer is computed by the \textbf{greedy algorithm}: at each step, select the largest Fibonacci number not exceeding the remainder, subtract it, and repeat.

\section{The block-partition encoding}

\subsection{The alphabet}

Let $\Sigma = \{\sigma_0, \sigma_1, \ldots, \sigma_{k-1}\}$ be a finite alphabet of $k \ge 2$ symbols. Each symbol~$\sigma_j$ is assigned the \textbf{symbol code} $j \in \{0, 1, \ldots, k-1\}$.

For illustration, we use a 10-symbol alphabet drawn from Peano arithmetic:

\medskip
\begin{center}
\begin{tabular}{c|cccccccccc}
\toprule
Code & 0 & 1 & 2 & 3 & 4 & 5 & 6 & 7 & 8 & 9 \\
\midrule
Symbol & $\mathtt{0}$ & $S$ & $+$ & $\cdot$ & $=$ & $\lnot$ & $\to$ & $\forall$ & $($ & $)$ \\
\bottomrule
\end{tabular}
\end{center}
\medskip

A complete first-order syntax would require a finite tokenization scheme for variables (e.g., de~Bruijn indices), proof delimiters, and possibly additional connectives. The construction handles any finite alphabet identically by adjusting~$k$.

\subsection{Block partition}

\begin{definition}[Block partition]\label{def:block}
For a $k$-symbol alphabet, define the \textbf{block size} $\beta = k+1$. For each position $i \in \{0, 1, 2, \ldots\}$ in a string, define the \textbf{position block} $B_i$ as the set of Fibonacci indices:
\[
B_i = \bigl\{\, 2 + i\beta,\; 2 + i\beta + 1,\; \ldots,\; 2 + i\beta + (k-1) \,\bigr\}.
\]
Each block $B_i$ contains exactly~$k$ consecutive Fibonacci indices---one for each symbol. The index $2 + i\beta + k$ (the first index of the gap) is \textbf{unused}, ensuring that the last index of~$B_i$ and the first index of~$B_{i+1}$ are separated by at least~$2$.
\end{definition}

For the 10-symbol alphabet ($k=10$, $\beta=11$), the encoding is displayed in Table~\ref{tab:encoding}. Each row corresponds to one symbol and shows the Fibonacci sequence from $F_2$ onward. \textbf{Bold} numbers are the Fibonacci numbers assigned to that symbol at each successive position; \textit{italicized} numbers are the unused gap entries. The three blocks shown correspond to positions~1, 2, and~3.

\begin{table}[ht]
\centering
\resizebox{\textwidth}{!}{%
\begin{tabular}{r@{\;:\;\,}*{33}{r}}
\toprule
& \multicolumn{10}{c}{\footnotesize Block $B_0$ (Pos.\,1)} & \multicolumn{1}{c}{\footnotesize\textit{gap}} & \multicolumn{10}{c}{\footnotesize Block $B_1$ (Pos.\,2)} & \multicolumn{1}{c}{\footnotesize\textit{gap}} & \multicolumn{10}{c}{\footnotesize Block $B_2$ (Pos.\,3)} & \multicolumn{1}{c}{\footnotesize\textit{gap}} \\
\cmidrule(lr){2-11}\cmidrule(lr){12-12}\cmidrule(lr){13-22}\cmidrule(lr){23-23}\cmidrule(lr){24-33}\cmidrule(lr){34-34}
$\mathtt{0}$ & \textbf{1} & 2 & 3 & 5 & 8 & 13 & 21 & 34 & 55 & 89 & \textit{144} & \textbf{233} & 377 & 610 & 987 & 1597 & 2584 & 4181 & 6765 & 10946 & 17711 & \textit{28657} & \textbf{46368} & 75025 & 121393 & 196418 & 317811 & 514229 & 832040 & 1346269 & 2178309 & 3524578 & \textit{5702887} \\
$S$ & 1 & \textbf{2} & 3 & 5 & 8 & 13 & 21 & 34 & 55 & 89 & \textit{144} & 233 & \textbf{377} & 610 & 987 & 1597 & 2584 & 4181 & 6765 & 10946 & 17711 & \textit{28657} & 46368 & \textbf{75025} & 121393 & 196418 & 317811 & 514229 & 832040 & 1346269 & 2178309 & 3524578 & \textit{5702887} \\
$+$ & 1 & 2 & \textbf{3} & 5 & 8 & 13 & 21 & 34 & 55 & 89 & \textit{144} & 233 & 377 & \textbf{610} & 987 & 1597 & 2584 & 4181 & 6765 & 10946 & 17711 & \textit{28657} & 46368 & 75025 & \textbf{121393} & 196418 & 317811 & 514229 & 832040 & 1346269 & 2178309 & 3524578 & \textit{5702887} \\
$\cdot$ & 1 & 2 & 3 & \textbf{5} & 8 & 13 & 21 & 34 & 55 & 89 & \textit{144} & 233 & 377 & 610 & \textbf{987} & 1597 & 2584 & 4181 & 6765 & 10946 & 17711 & \textit{28657} & 46368 & 75025 & 121393 & \textbf{196418} & 317811 & 514229 & 832040 & 1346269 & 2178309 & 3524578 & \textit{5702887} \\
$=$ & 1 & 2 & 3 & 5 & \textbf{8} & 13 & 21 & 34 & 55 & 89 & \textit{144} & 233 & 377 & 610 & 987 & \textbf{1597} & 2584 & 4181 & 6765 & 10946 & 17711 & \textit{28657} & 46368 & 75025 & 121393 & 196418 & \textbf{317811} & 514229 & 832040 & 1346269 & 2178309 & 3524578 & \textit{5702887} \\
$\lnot$ & 1 & 2 & 3 & 5 & 8 & \textbf{13} & 21 & 34 & 55 & 89 & \textit{144} & 233 & 377 & 610 & 987 & 1597 & \textbf{2584} & 4181 & 6765 & 10946 & 17711 & \textit{28657} & 46368 & 75025 & 121393 & 196418 & 317811 & \textbf{514229} & 832040 & 1346269 & 2178309 & 3524578 & \textit{5702887} \\
$\to$ & 1 & 2 & 3 & 5 & 8 & 13 & \textbf{21} & 34 & 55 & 89 & \textit{144} & 233 & 377 & 610 & 987 & 1597 & 2584 & \textbf{4181} & 6765 & 10946 & 17711 & \textit{28657} & 46368 & 75025 & 121393 & 196418 & 317811 & 514229 & \textbf{832040} & 1346269 & 2178309 & 3524578 & \textit{5702887} \\
$\forall$ & 1 & 2 & 3 & 5 & 8 & 13 & 21 & \textbf{34} & 55 & 89 & \textit{144} & 233 & 377 & 610 & 987 & 1597 & 2584 & 4181 & \textbf{6765} & 10946 & 17711 & \textit{28657} & 46368 & 75025 & 121393 & 196418 & 317811 & 514229 & 832040 & \textbf{1346269} & 2178309 & 3524578 & \textit{5702887} \\
$($ & 1 & 2 & 3 & 5 & 8 & 13 & 21 & 34 & \textbf{55} & 89 & \textit{144} & 233 & 377 & 610 & 987 & 1597 & 2584 & 4181 & 6765 & \textbf{10946} & 17711 & \textit{28657} & 46368 & 75025 & 121393 & 196418 & 317811 & 514229 & 832040 & 1346269 & \textbf{2178309} & 3524578 & \textit{5702887} \\
$)$ & 1 & 2 & 3 & 5 & 8 & 13 & 21 & 34 & 55 & \textbf{89} & \textit{144} & 233 & 377 & 610 & 987 & 1597 & 2584 & 4181 & 6765 & 10946 & \textbf{17711} & \textit{28657} & 46368 & 75025 & 121393 & 196418 & 317811 & 514229 & 832040 & 1346269 & 2178309 & \textbf{3524578} & \textit{5702887} \\
\bottomrule
\end{tabular}%
}

\vspace{1em}
\caption{Block-partition mapping for the 10-symbol alphabet. Each row shows the Fibonacci sequence for one symbol, with \textbf{bold} entries marking the assigned Fibonacci number at each position (blocks of~10, separated by italicized gap entries). To encode a string: for each position, find the row of the symbol at that position, read off its bold entry for that block, and sum all selected entries. Zeckendorf's theorem guarantees that the sum uniquely determines the selection.}
\label{tab:encoding}
\end{table}

\subsection{Encoding}

\begin{definition}[Block-partition encoding]\label{def:encoding}
Let $s = s_1 s_2 \cdots s_m$ be a nonempty string of length~$m$ over~$\Sigma$, with symbol codes $c_1, c_2, \ldots, c_m \in \{0,1,\ldots,k-1\}$. The \textbf{block-partition Fibonacci code} of~$s$ is
\[
\gc{s} \;=\; \sum_{i=0}^{m-1} F_{2 + i\beta + c_{i+1}}.
\]
That is, for position~$i$ (0-indexed), select the Fibonacci number at index $2 + i\beta + c_{i+1}$ from block~$B_i$. For the empty string, define $\gc{\varepsilon} = 0$.
\end{definition}

\begin{example}\label{ex:0eq0}
The string $\mathtt{0=0}$ has symbol codes $(0,4,0)$. Under the encoding:

\smallskip
\noindent
\begin{tabular}{@{}l@{}}
Position~0, symbol~$\mathtt{0}$ (code~0): $F_2 = 1$ \\
Position~1, symbol~$=$ (code~4): $F_{17} = 1597$ \\
Position~2, symbol~$\mathtt{0}$ (code~0): $F_{24} = 46368$
\end{tabular}
\smallskip

\noindent Hence $\gc{\mathtt{0=0}} = 1 + 1597 + 46368 = \mathbf{47966}$.
\end{example}

\begin{example}\label{ex:s0eqs0}
The string $S(\mathtt{0})=S(\mathtt{0})$ has symbol codes $(1,8,0,9,4,1,8,0,9)$. Under the encoding:

\smallskip
\noindent
\begin{tabular}{@{}l@{}}
Position~0, symbol~$S$ (code~1): $F_3 = 2$ \\
Position~1, symbol~$($ (code~8): $F_{21} = 10946$ \\
Position~2, symbol~$\mathtt{0}$ (code~0): $F_{24} = 46368$ \\
Position~3, symbol~$)$ (code~9): $F_{44} = 701408733$ \\
Position~4, symbol~$=$ (code~4): $F_{50} = 12586269025$ \\
Position~5, symbol~$S$ (code~1): $F_{58} = 591286729879$ \\
Position~6, symbol~$($ (code~8): $F_{76} = 3416454622906707$ \\
Position~7, symbol~$\mathtt{0}$ (code~0): $F_{79} = 14472334024676221$ \\
Position~8, symbol~$)$ (code~9): $F_{99} = 218922995834555169026$
\end{tabular}
\smallskip

\noindent Hence $\gc{S(\mathtt{0})=S(\mathtt{0})} = 218940885227777216907$, a 21-digit number. For comparison, the prime-power G\"odel encoding of the same string produces a number of approximately 49~digits.
\end{example}

\section{Correctness}

\subsection{Non-adjacency}

\begin{lemma}[Non-adjacency]\label{lem:nonadj}
For any nonempty string $s = s_1 s_2 \cdots s_m$ over a $k$-symbol alphabet, the Fibonacci indices selected by the block-partition encoding are mutually non-consecutive.
\end{lemma}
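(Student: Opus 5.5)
The plan is to compute the selected indices explicitly and compare any two of them directly. Write $e_i = 2 + i\beta + c_{i+1}$ for $i = 0, \ldots, m-1$, where $c_{i+1} \in \{0, \ldots, k-1\}$. By Definition~\ref{def:block}, each $e_i$ lies in the block $B_i$. Since each position contributes exactly one index, it suffices to show that $e_{j} - e_i \ge 2$ whenever $0 \le i < j \le m-1$. In particular, this also shows that the indices are distinct and strictly increasing in $i$.

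The key step is a pair of bounds coming from the range of the symbol codes. The index $e_i$ is at most the last element of $B_i$, namely $2 + i\beta + k - 1$, while $e_j$ is at least the first element of $B_j$, namely $2 + j\beta$. Hence
\[
e_j - e_i \;\ge\; (j - i)\beta - (k-1) \;\ge\; \beta - k + 1 \;=\; 2,
\]
using $j - i \ge 1$ and $\beta = k+1$. This is exactly the observation in Definition~\ref{def:block} that the unused gap index $2 + i\beta + k$ separates the last index of $B_i$ from the first index of $B_{i+1}$ by $2$. For non-consecutive blocks, the separation only grows, since each further block adds another $\beta$ to the lower bound.

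I will also note, for later use with Theorem~\ref{thm:zeck}, that every $e_i \ge 2$. This holds because $e_0 \ge 2$ and the sequence is increasing. Consequently, the selected indices listed in decreasing order satisfy exactly the hypotheses of Zeckendorf's representation.

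I do not expect a genuine obstacle here. The only point requiring care is the bookkeeping with $\beta = k+1$, so that the worst case (symbol code $k-1$ at position $i$ and code $0$ at position $i+1$) is seen to give a difference of exactly $2$ rather than $1$.
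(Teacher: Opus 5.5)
Your proof is correct and takes essentially the same approach as the paper: compare the largest index of $B_i$ with the smallest index of $B_{i+1}$ to get a difference of exactly $2$, with larger gaps between blocks that are further apart. Your single inequality $e_j - e_i \ge (j-i)\beta - (k-1) \ge 2$ combines the paper's adjacent-block and non-adjacent-block cases into one line, and your remark that every $e_i \ge 2$ is a small bonus that helps when Theorem~\ref{thm:zeck} is applied later.
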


\begin{proof}
Each position~$i$ selects exactly one index from block~$B_i$. Within a single block, only one index is selected, so no adjacency arises within a block.

Between consecutive blocks: the maximum index in~$B_i$ is $2 + i\beta + (k-1)$. The minimum index in~$B_{i+1}$ is $2 + (i+1)\beta = 2 + i\beta + k + 1$. The difference is
\[
(2 + i\beta + k + 1) - (2 + i\beta + k - 1) = 2,
\]
so the two indices are non-consecutive in the Fibonacci sequence. For non-adjacent blocks~$B_i$ and~$B_j$ with $j > i+1$, the gap is strictly larger.
\end{proof}

\subsection{Injectivity}

\begin{theorem}[Injectivity]\label{thm:inj}
The block-partition encoding is injective on~$\Sigma^*$: if $s \ne t$, then $\gc{s} \ne \gc{t}$.
\end{theorem}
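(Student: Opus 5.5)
The plan is to show that the map $s \mapsto Z(\gc{s})$ is injective and recovers $s$ completely, using Lemma~\ref{lem:nonadj} together with the uniqueness part of Theorem~\ref{thm:zeck}.

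First, for a nonempty string $s$ of length $m$, consider the index set
\[
I(s)=\{\,2+i\beta+c_{i+1} : 0\le i\le m-1\,\}.
\]
These $m$ indices are pairwise distinct, since they lie in the disjoint blocks $B_0,\dots,B_{m-1}$. All of them are $\ge 2$. By Lemma~\ref{lem:nonadj} they are mutually non-consecutive. Hence $\gc{s}=\sum_{e\in I(s)}F_e$ is a sum of the form described in Theorem~\ref{thm:zeck}, and $\gc{s}\ge F_2=1>0$. Uniqueness in Theorem~\ref{thm:zeck} then gives $Z(\gc{s})=I(s)$. For the empty string, $Z(0)=\varnothing=I(\varepsilon)$ by convention. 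So in all cases $Z(\gc{s})=I(s)$.

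Second, I will show that $s$ can be read off from $I(s)$, so that $s\mapsto I(s)$ is injective.

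1. The length is $m=|I(s)|$, because each position contributes exactly one index and distinct positions contribute distinct indices.
2. The blocks $B_i$ are pairwise disjoint and the gap indices are never used. So each $e\in I(s)$ lies in a unique block $B_i$, namely the one with $i=\lfloor (e-2)/\beta\rfloor$.
3. The symbol code at that position is recovered as $c_{i+1}=(e-2)-i\beta$, which lies in $\{0,\dots,k-1\}$.

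Thus $I(s)$ determines both $m$ and every $c_{i+1}$, i.e.\ it determines $s$.

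Finally, suppose $s\ne t$ but $\gc{s}=\gc{t}$. Then $I(s)=Z(\gc{s})=Z(\gc{t})=I(t)$, and the decoding above forces $s=t$, a contradiction.

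The only step needing care is the first: one must check that all hypotheses of Theorem~\ref{thm:zeck} hold, namely distinct indices, all indices $\ge 2$, and pairwise gaps $\ge 2$. This is what rules out alternative representations, such as $F_2$ versus $F_1$ giving equal values. The index offset of $2$ in Definition~\ref{def:block} secures the lower bound, and Lemma~\ref{lem:nonadj} supplies the gap condition.
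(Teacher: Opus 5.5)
Your proof is correct and uses the same key idea as the paper: Lemma~\ref{lem:nonadj} and the uniqueness in Theorem~\ref{thm:zeck} force $Z(\gc{s})$ to be exactly the selected index set, so distinct index sets give distinct codes. The only difference is that you show distinct strings have distinct index sets by reading $s$ back off $I(s)$ (essentially the argument of Lemma~\ref{lem:length} and Theorem~\ref{thm:decode}), whereas the paper splits into cases (empty versus nonempty, different lengths, same length with a differing position); your explicit check that all indices are $\ge 2$ is a point the paper leaves implicit.
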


\begin{proof}
If one of $s,t$ is the empty string and the other is not, then one code is~$0$ and the other is a positive integer (since every nonempty string selects at least one $F_e \ge F_2 = 1$).

If $s$ and~$t$ are both nonempty with $|s| = m < m' = |t|$, then the Zeckendorf support of~$\gc{t}$ contains an index from block~$B_m$ or higher, while every index in $Z(\gc{s})$ lies in blocks $B_0$ through~$B_{m-1}$. These are distinct subsets of non-consecutive indices (by Lemma~\ref{lem:nonadj}), so by Theorem~\ref{thm:zeck} the sums differ.

If $|s|=|t|$ but $s$ and~$t$ differ in at least one position~$j$, then position~$j$ selects a different Fibonacci index in~$s$ than in~$t$. The remaining positions select identical indices. So the two Zeckendorf supports differ, and by Theorem~\ref{thm:zeck} the sums differ.
\end{proof}

\subsection{Valid codes and length recovery}

\begin{definition}[Valid code]\label{def:valid}
A natural number~$n$ is a \textbf{valid code} if $n=0$ (encoding the empty string) or if there exist $m \ge 1$ and unique $c_1, c_2, \ldots, c_m \in \{0,1,\ldots,k-1\}$ such that
\[
Z(n) = \bigl\{\, 2 + i\beta + c_{i+1} : 0 \le i < m \,\bigr\}.
\]
That is, the Zeckendorf support of~$n$ consists of exactly~$m$ indices, one from each block~$B_0$ through~$B_{m-1}$, with no indices in gap positions. The image of the encoding~$\gc{\cdot}$ is exactly the set of valid codes.
\end{definition}

\begin{lemma}[Length recovery]\label{lem:length}
For any valid code~$n$, the length of the encoded string is recoverable: if $n=0$, then $m=0$; if $n > 0$, then $m = |Z(n)|$.
\end{lemma}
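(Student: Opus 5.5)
The plan is to read the length directly off the Zeckendorf support, using the definition of a valid code (Definition~\ref{def:valid}) together with the uniqueness in Theorem~\ref{thm:zeck}. I would split into the two cases named in the statement.

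\emph{Case $n=0$.} By Definition~\ref{def:encoding}, $\gc{\varepsilon}=0$. Conversely, every nonempty string has a positive code, because each position contributes some $F_e$ with $e\ge 2$, hence $F_e\ge 1$; this was already observed in the proof of Theorem~\ref{thm:inj}. So if $n=0$, the only string with code $n$ is $\varepsilon$, and $m=0$.

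\emph{Case $n>0$.} Let $s=s_1\cdots s_m$ be a string with $\gc{s}=n$. Such an $s$ exists since $n$ is valid, and it is unique by Theorem~\ref{thm:inj}. Its code is written as
\[
n=\sum_{i=0}^{m-1}F_{2+i\beta+c_{i+1}}.
\]
By Lemma~\ref{lem:nonadj}, the indices in this sum are pairwise non-consecutive, and all of them are at least $2$. The indices are also pairwise distinct, because they lie in the disjoint blocks $B_0,\dots,B_{m-1}$. So this sum is itself a Zeckendorf representation of $n$. By the uniqueness clause of Theorem~\ref{thm:zeck}, it must equal the greedy representation, giving
\[
Z(n)=\{2+i\beta+c_{i+1}:0\le i<m\}.
\]
This set has exactly $m$ elements, since the elements are distinct. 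Therefore $|Z(n)|=m$.

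The only step that needs care is the identification of the encoding sum with $Z(n)$. It requires checking all the hypotheses of Theorem~\ref{thm:zeck}: the indices must be distinct, at least $2$, and pairwise differ by at least $2$. Distinctness is what makes $|Z(n)|$ count positions rather than collapse repeated terms. Everything else is bookkeeping. Finally, $Z(n)$ is computable by the greedy algorithm, so $m$ is effectively recoverable from $n$.
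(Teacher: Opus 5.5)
Your proof is correct and is essentially the paper's argument: both come down to $Z(n)$ being the set $\{2+i\beta+c_{i+1} : 0\le i<m\}$ of $m$ distinct indices, one per block, so $|Z(n)|=m$. The paper reads this description of $Z(n)$ directly off Definition~\ref{def:valid} (and also records the equivalent formula $m=\lfloor (e_{\max}-2)/\beta\rfloor+1$), whereas you re-derive it from the preimage string via Lemma~\ref{lem:nonadj} and Zeckendorf uniqueness, which makes explicit that this $m$ is the length of the unique string coded by $n$ and also spells out the $n=0$ case the paper leaves implicit.
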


\begin{proof}
For $n > 0$, Definition~\ref{def:valid} requires $Z(n)$ to consist of exactly~$m$ indices, one per block. Therefore $|Z(n)|=m$. Equivalently, $m = \lfloor (e_{\max}-2)/\beta \rfloor + 1$ where $e_{\max} = \max Z(n)$, since the largest index lies in~$B_{m-1}$.
\end{proof}

\subsection{Decoding}

\begin{theorem}[Decoding]\label{thm:decode}
For any valid code~$n$, the original string~$s$ with $\gc{s}=n$ is uniquely recoverable.
\end{theorem}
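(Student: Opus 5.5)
The plan is to give an explicit decoding procedure and show that it inverts $\gc{\cdot}$ on valid codes. Uniqueness then follows from Theorem~\ref{thm:inj}, and existence from the definition of a valid code. The case $n=0$ is immediate: by Definition~\ref{def:valid} and Lemma~\ref{lem:length} it corresponds only to $\varepsilon$. For the rest, fix a valid code $n>0$.

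\textbf{Step 1: compute the support.} Using the greedy algorithm, compute the Zeckendorf support $Z(n)=\{e_1>\cdots>e_s\}$. Theorem~\ref{thm:zeck} guarantees this set is well defined and unique.

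\textbf{Step 2: read off the length.} By Lemma~\ref{lem:length}, $m=|Z(n)|=s$.

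\textbf{Step 3: read off each symbol.} For each $e\in Z(n)$, write $e-2=i\beta+r$ with $0\le r<\beta$ by Euclidean division. Validity of $n$ says $e=2+i\beta+c_{i+1}$ with $0\le c_{i+1}\le k-1<\beta$. By uniqueness of quotient and remainder, $i$ is the block index and $r=c_{i+1}$. Validity also gives exactly one element of $Z(n)$ per block $B_0,\dots,B_{m-1}$. So the map $e\mapsto i$ is a bijection $Z(n)\to\{0,\dots,m-1\}$, and the remainders determine $c_1,\dots,c_m$. Concretely, ordering $Z(n)$ increasingly, the $j$-th smallest element lies in $B_{j-1}$.

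\textbf{Step 4: verify.} Set $s=\sigma_{c_1}\cdots\sigma_{c_m}$. Then $\gc{s}=\sum_{i}F_{2+i\beta+c_{i+1}}=\sum_{e\in Z(n)}F_e=n$, so $s$ is a preimage. If $t$ also satisfies $\gc{t}=n$, then $t=s$ by Theorem~\ref{thm:inj}.

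\textbf{Main obstacle.} The only delicate point is Step 3: justifying that the division correctly identifies both the block and the offset. This rests on two facts: the remainder is at most $k-1$, never the gap value $k$, by validity; and blocks are disjoint intervals of length $k$ inside residue windows of length $\beta=k+1$. The rest follows from earlier results.
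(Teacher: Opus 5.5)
Your proposal is correct and follows essentially the same route as the paper: compute $Z(n)$ greedily, take $m=|Z(n)|$, and recover each position and symbol code as the quotient and remainder of $e-2$ divided by $\beta$. Your Step 4, which checks that $\gc{s}=n$ and invokes Theorem~\ref{thm:inj} for uniqueness, makes explicit what the paper's proof leaves implicit.
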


\begin{proof}
If $n=0$, return the empty string. Otherwise, apply the greedy Zeckendorf algorithm to~$n$ to obtain the support~$Z(n)$. By Definition~\ref{def:valid} and Lemma~\ref{lem:length}, this support contains exactly~$m$ indices, one per block. For each recovered index~$e_j$, compute
\[
\text{position:}\quad i = \bigl\lfloor (e_j - 2)/\beta \bigr\rfloor, \qquad \text{symbol code:}\quad c = (e_j - 2) \bmod \beta.
\]
This recovers the symbol at each position. The division and modulus by the fixed constant $\beta = k+1$ are computable by repeated subtraction.
\end{proof}

\section{Complexity analysis}

\subsection{Code length}

\begin{theorem}[Linear code length]\label{thm:linear}
For $k \ge 2$ and $m \ge 1$, the worst-case digit count $L(m) = \max\{\operatorname{digits}(\gc{s}) : s \in \Sigma^m\}$ satisfies $L(m) = \Theta(m)$.
\end{theorem}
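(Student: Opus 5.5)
The plan is to sandwich $\gc{s}$ between two explicit Fibonacci numbers whose indices are linear in $m$, and then use the Binet formula to turn indices into digit counts. Throughout, $\operatorname{digits}(n) = \lfloor \log_{10} n \rfloor + 1$ for $n \ge 1$, and $\beta = k+1$ is a fixed constant.

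\textbf{Upper bound.} The largest index any string of length $m$ can select is $e^\ast = 2 + (m-1)\beta + (k-1)$. By Lemma~\ref{lem:nonadj}, the support of $\gc{s}$ is a set of non-consecutive indices with maximum at most $e^\ast$. I will use the standard fact that a sum of non-consecutive Fibonacci numbers with largest index $e$ is strictly less than $F_{e+1}$; this is the inequality underlying the greedy step in Theorem~\ref{thm:zeck}, and it follows by a short induction on $e$. Hence $\gc{s} < F_{e^\ast+1} \le \varphi^{e^\ast}$, using the bound $F_n \le \varphi^{n-1}$, which also follows by induction. Taking logarithms gives
\[
\operatorname{digits}(\gc{s}) \le (m\beta + 1)\log_{10}\varphi + 1 = O(m),
\]
uniformly over $s \in \Sigma^m$.

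\textbf{Lower bound.} It suffices to exhibit one string of length $m$ with a large code; any string will do. Every nonempty $s$ of length $m$ selects an index of at least $2 + (m-1)\beta$ from block $B_{m-1}$, so
\[
\gc{s} \ge F_{2+(m-1)\beta} \ge \varphi^{(m-1)\beta},
\]
using $F_n \ge \varphi^{n-2}$ for $n \ge 1$. Hence $\operatorname{digits}(\gc{s}) \ge (m-1)\beta\log_{10}\varphi$. For $m \ge 2$ this is at least $\tfrac{m}{2}\beta\log_{10}\varphi$. For $m = 1$ we have trivially $\operatorname{digits} \ge 1$. So $L(m) \ge c\,m$ for some constant $c > 0$ depending only on $k$, which together with the upper bound gives $L(m) = \Theta(m)$.

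\textbf{Main obstacle.} The main point needing care is the upper bound. A naive bound of $m$ times the largest term is still $O(m)$ in digits, since it adds only $\log_{10} m$, so even that would suffice. The cleaner route via $F_{e^\ast+1}$ gives an explicit constant, and I would present that one. The remaining work is the routine inductions $\varphi^{n-2} \le F_n \le \varphi^{n-1}$; these can be quoted from the Binet formula in the preliminaries.
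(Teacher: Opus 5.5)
Your proposal is correct and follows the paper's proof. Both arguments sandwich $\gc{s}$ between $F_{2+(m-1)\beta}$ (the term from block $B_{m-1}$) and $F_{m\beta+1}$ (the standard bound on sums of non-consecutive Fibonacci numbers), then take logarithms. Your explicit bounds $\varphi^{n-2}\le F_n\le\varphi^{n-1}$ and your separate treatment of $m=1$ only make the constants concrete, where the paper just asserts the $\Theta$ bounds.
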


\begin{proof}
The largest Fibonacci index used in encoding a string of length~$m$ is at most
\[
e_{\max} = 2 + (m-1)\beta + (k-1) = m\beta.
\]
Every length-$m$ code satisfies
\[
F_{2+(m-1)\beta} \;\le\; \gc{s} \;<\; F_{m\beta+1}.
\]
The lower bound holds because the code includes at least $F_{2+(m-1)\beta}$ from block~$B_{m-1}$. The upper bound holds because the sum of any set of non-consecutive Fibonacci numbers with indices up to~$e$ is strictly less than~$F_{e+1}$ (a standard property of Zeckendorf representations).

Taking logarithms, both bounds are $\Theta(m\beta \cdot \log_{10}\varphi) = \Theta(m)$. For the 10-symbol alphabet ($\beta=11$), this gives approximately $2.3$ digits per symbol.
\end{proof}

\subsection{Optimality}

\begin{theorem}[Asymptotic optimality]\label{thm:optimal}
For $k \ge 2$, the block-partition encoding has asymptotically optimal code length: $L(m) = \Theta(m)$, matching the information-theoretic lower bound.
\end{theorem}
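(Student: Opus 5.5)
The upper bound $L(m) = O(m)$ is already established by Theorem~\ref{thm:linear}, so the task is a matching lower bound that holds for \emph{every} injective encoding of $\Sigma^m$ into $\mathbb{N}$. That is the meaning of ``information-theoretic lower bound.'' I will prove this by counting.

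First I fix $m \ge 1$ and let $E : \Sigma^* \to \mathbb{N}$ be an arbitrary injective map. The set $\Sigma^m$ has exactly $k^m$ elements, so $E$ restricted to $\Sigma^m$ takes $k^m$ distinct natural-number values. The number of natural numbers with at most $d$ decimal digits is $10^d$, namely $0,1,\ldots,10^d-1$.

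If every string in $\Sigma^m$ had a code of at most $d$ digits, pigeonhole would force $k^m \le 10^d$. Hence some $s \in \Sigma^m$ has $\operatorname{digits}(E(s)) \ge m\log_{10}k$. Since $k \ge 2$, this gives a worst-case digit count of at least $m\log_{10}2 = \Omega(m)$ for any injective encoding. The bound applies in particular to $\gc{\cdot}$, which is injective by Theorem~\ref{thm:inj}.

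Combining the two directions, we have $m\log_{10}k \le L(m) \le m\beta\log_{10}\varphi + O(1)$, where the upper bound comes from $\gc{s} < F_{m\beta+1} \le \varphi^{m\beta}$. The ratio between the two bounds is the constant $(k+1)\log\varphi/\log k$, which depends only on the alphabet. This shows that the block-partition encoding matches the lower bound up to a constant factor. If one prefers to count codes among all lengths $\le m$, one can use $\sum_{j\le m}k^j \ge k^m$ and the same argument goes through unchanged.

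The main subtlety is not technical but definitional: I must state precisely that ``optimal'' means within a constant factor of the bound valid for all injective encodings of $\Sigma^m$. I must also handle the digit convention (including that $0$ has one digit) so that the pigeonhole count $10^d$ is exact. Beyond that, the argument is a direct count plus the already-proved Theorem~\ref{thm:linear}.
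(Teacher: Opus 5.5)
Your argument is correct and is essentially the paper's proof. In both, a pigeonhole count over the $k^m$ strings of length $m$ gives the $m\log_{10}k$ digit lower bound for any injective encoding, Theorem~\ref{thm:linear} supplies the matching linear upper bound, and the constant ratio $(k+1)\log\varphi/\log k$ remains. The one cosmetic difference is that you get the lower half of $L(m)=\Theta(m)$ for $\gc{\cdot}$ from injectivity plus pigeonhole rather than from $F_{2+(m-1)\beta}\le\gc{s}$, which changes nothing of substance.
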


\begin{proof}
There are $k^m$ distinct strings of length~$m$. Any injective map into~$\mathbb{N}$ must use numbers as large as $k^m - 1$, requiring at least $m \cdot \log_{10}(k)$ digits. The block-partition encoding achieves $L(m) = \Theta(m \cdot (k+1) \cdot \log_{10}\varphi)$. The ratio of the constant factors is
\[
\frac{(k+1)\,\log_{10}\varphi}{\log_{10} k}.
\]
For $k=10$, this is $11 \cdot 0.209 / 1 \approx 2.3$. Both bounds are linear in~$m$; the block-partition encoding incurs a constant-factor overhead of approximately~$2.3$ relative to an information-theoretically optimal encoding.
\end{proof}

\subsection{Exponential blowup of right-nested carryless pairing}\label{sec:blowup}

Rosko~\cite{Rosko2025} defines a carryless pairing function $\pi : \mathbb{N}^2 \to \mathbb{N}$. His paper presents it as a binary pairing and sketches a $k$-ary generalization, but does not develop a completed sequence encoding. The natural way to encode a length-$m$ sequence $(a_1, a_2, \ldots, a_m)$ using a binary pairing is right-nesting. Formally, define $t_1 = a_m$ and $t_{j+1} = \pi(a_{m-j},\, t_j)$ for $j = 1, \ldots, m-1$. The final code is~$t_{m-1}$, requiring $m-1$ pairings. Let $M_j = \max Z(t_j)$ denote the largest Fibonacci index in the support after~$j$ pairings.

To avoid the issue that $Z(0) = \varnothing$ (making Rosko's pairing degenerate for zero inputs---in particular $\pi(0,0)=0$), we adopt the standard shift-by-one convention: symbol codes are drawn from $\{1, 2, \ldots, k\}$ rather than $\{0,\ldots,k-1\}$.

\begin{proposition}[Exponential growth]\label{prop:blowup}
Let codes be drawn from $\{1,\ldots,k\}$. Right-nested encoding of a length-$m$ sequence via Rosko's carryless pairing produces codes whose worst-case digit count is $\Theta(2^m)$.
\end{proposition}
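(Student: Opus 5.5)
The plan is to turn the digit-count question into a recursion on the largest Zeckendorf index $M_j=\max Z(t_j)$. For this I will use Rosko's pairing in its band-interleaving form. Each index $e\in Z(x)$ of the first argument is sent to an even index $2e+c_0$, and each index $e\in Z(y)$ of the second argument to an odd index $2e+c_1$, where $c_0,c_1$ are small fixed offsets from his definition. Then
\[
\pi(x,y)=\sum_{e\in Z(x)}F_{2e+c_0}+\sum_{e\in Z(y)}F_{2e+c_1}.
\]
The even/odd separation makes this a legal Zeckendorf representation, so by Theorem~\ref{thm:zeck}
\[
Z(\pi(x,y))=\{2e+c_0: e\in Z(x)\}\cup\{2e+c_1: e\in Z(y)\}.
\]
From this I read off
\[
\max Z(\pi(x,y))=\max\bigl(2\max Z(x)+c_0,\;2\max Z(y)+c_1\bigr),
\]
valid whenever both supports are nonempty. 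The shift-by-one convention guarantees this: every $a_i\ge 1$, and by induction every $t_j\ge 1$.

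Next I would set up the recursion. Put $A=\max\{\max Z(a): 1\le a\le k\}$, a constant depending only on $k$, and write $t_{j+1}=\pi(a_{m-j},t_j)$. The formula above gives the two-sided bound
\[
2M_j+c_1\;\le\; M_{j+1}\;\le\; \max(2A+c_0,\;2M_j+c_1)\;\le\; 2M_j+C
\]
for a constant $C$, using $M_j\ge 2$. The lower bound holds for every input sequence. Unrolling gives $M_{m-1}\ge 2^{m-2}M_1\ge 2^{m-2}\cdot 2$ and $M_{m-1}\le 2^{m-2}(M_1+C)\le 2^{m-2}(A+C)$. Hence $M_{m-1}=\Theta(2^m)$ uniformly over all sequences, in particular in the worst case.

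Finally I would convert $M_{m-1}$ into digits. The bound already used in the proof of Theorem~\ref{thm:linear} gives $F_{M}\le n<F_{M+1}$ for $M=\max Z(n)$. Together with Binet's formula this yields $\log_{10} n = M\log_{10}\varphi+O(1)$, so $\operatorname{digits}(t_{m-1})=\Theta(M_{m-1})=\Theta(2^m)$.

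The main obstacle is the first step: pinning down Rosko's exact index maps (the offsets $c_0,c_1$, and whether indices are rescaled relative to the base index $2$). The argument only needs two things. First, the second-argument map must be affine with slope $2$, i.e.\ $e\mapsto 2e+O(1)$. Second, supports must be placed disjointly and non-adjacently, so that the union is literally the Zeckendorf support and no carries occur. I would verify both directly from his definition and check that the additive constants are harmless; they are, since they only affect the constant factor in the $\Theta$.
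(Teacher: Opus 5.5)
Your argument is essentially the paper's: a two-sided recurrence $M_{j+1}=2M_j+O(1)$ on the top Zeckendorf index, unrolled to $\Theta(2^m)$ and converted to digits via $F_M\le n<F_{M+1}$. The one correction concerns your model of Rosko's map. The paper uses that the top index of $\pi(x,y)$ is $B(x)+2R(y)-1$ with the $x$-dependent delimiter $B(x)=2r(x)$, so your $c_1$ is really $B(a)-1\in[b_{\min}-1,\,b_{\max}-1]$, which is bounded over $\{1,\ldots,k\}$ and therefore harmless. It is also this delimiter separating the two bands, not parity alone (which would allow $2e$ next to $2e+1$), that makes the union a legal Zeckendorf support; that is exactly the check you flagged.
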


\begin{proof}
In Rosko's pairing $\pi(x,y)$, the output's Zeckendorf support occupies indices up to $B(x) + 2\,R(y) - 1$, where $B(x) = 2\,r(x)$ is the delimiter, $r(x) = \min\{e : F_e > x\}$ is the rank, and $R(y) = \max Z(y)$. Since all symbol codes are positive, $Z(a_i)$ is nonempty and $R(a_i)$ is well-defined for every symbol.

Over the fixed symbol set $\{1,\ldots,k\}$, define
\[
b_{\min} = 2 \cdot \min\bigl\{\, r(a) : a \in \{1,\ldots,k\} \,\bigr\}, \qquad
b_{\max} = 2 \cdot \max\bigl\{\, r(a) : a \in \{1,\ldots,k\} \,\bigr\}.
\]
At each nesting step, the odd band determines the growth:
\[
2\,M_j + b_{\min} - 1 \;\le\; M_{j+1} \;\le\; 2\,M_j + b_{\max} - 1.
\]
Both bounds are linear recurrences of the form $M_{j+1} = 2M_j + C$, solving to $A \cdot 2^j - C$ for a constant~$A$ determined by initial conditions. Therefore $M_j = \Theta(2^j)$. Since the digit count is $\Theta(M_j \cdot \log_{10}\varphi)$, the encoded number has $\Theta(2^j)$ digits. A length-$m$ sequence requires $m-1$ pairings, giving worst-case $\Theta(2^m)$ digit growth.
\end{proof}

For concreteness: with symbol code~5 (where $r(5)=6$, $B(5)=12$, $R(5)=5$), the recurrence gives $M_1=21$, $M_2=53$, $M_3=117$, $M_4=245$. A length-5 sequence requires 4~pairings, so the largest Fibonacci index is $M_4=245$. By comparison, the block-partition encoding of a length-5 string uses a maximum Fibonacci index of $5 \cdot 11 = 55$.

\section{Comparison of encodings}

\begin{table}[ht]
\centering
\begin{tabular}{ll}
\toprule
\textbf{Encoding} & \textbf{Code length growth} \\
\midrule
Prime-power (G\"odel, 1931) & Super-linear \\
Base-$k$ positional & Linear (optimal constant) \\
Right-nested carryless pairing (Rosko, 2025) & Exponential: $\Theta(2^m)$ (Prop.~\ref{prop:blowup}) \\
\textbf{Block partition (this paper)} & \textbf{Linear: $\Theta(m)$ (Thm.~\ref{thm:linear})} \\
\bottomrule
\end{tabular}

\vspace{1em}
\caption{Worst-case code length for a string of length~$m$. To our knowledge, the block-partition construction is the first Zeckendorf-based encoding to achieve linear code growth.}
\label{tab:comparison}
\end{table}

\section{Conclusion}

The block-partition Fibonacci encoding provides a simple, compact mapping from finite strings to natural numbers, with worst-case code length growing linearly in the string length. The construction rests on an elementary observation: partitioning the Fibonacci sequence into blocks with single-index gaps ensures that any selection of one Fibonacci number per block yields a valid Zeckendorf representation, and Zeckendorf's theorem then guarantees unique decodability.

For sequence encoding over the same Zeckendorf foundation, the block-partition construction achieves $\Theta(m)$ code length where the natural right-nested use of Rosko's binary carryless pairing produces $\Theta(2^m)$ code length (Proposition~\ref{prop:blowup}), an exponential improvement.

Whether the additive character of this encoding can be exploited for theory-internal purposes---such as establishing $\Delta_0$-definability of syntactic predicates in bounded arithmetic or enabling incompleteness proofs in systems lacking multiplication---is a question we leave to future work.


\end{document}